\documentclass[12pt]{article}

\usepackage[margin=1in]{geometry}
\usepackage{amsmath,amssymb,amsthm}
\usepackage{natbib}
\usepackage{hyperref}
\usepackage{graphicx}
\usepackage{tikz}
\usepackage{setspace}
\usepackage{booktabs}
\usepackage{caption}

\hypersetup{
    colorlinks=true,
    linkcolor=blue,
    citecolor=blue,
    urlcolor=blue
}

\newtheorem{proposition}{Proposition}
\newtheorem{assumption}{Assumption}
\newtheorem{corollary}{Corollary}

\title{\textbf{Why the Future Isn’t Trading: Causally Inert Events as a Test for Time Travelers}}
\author{%
  \textit{David Awad}
}
\date{March 2026}

\begin{document}

\maketitle

\begin{abstract}
We present a novel empirical argument against the existence of single-timeline backward time travel using the price behavior of prediction markets. If rational agents could travel backward in time, binary prediction contracts would collapse to degenerate prices---$\$1$ or $\$0$---from the moment of market inception. We observe no such behavior across hundreds of thousands of resolved contracts. This test is sharper than prior economic arguments \citep{reinganum1986} because it yields a directly falsifiable prediction, and more robust than physical experiments \citep{hawking1992} because it requires only the existence of a single profit-motivated agent anywhere in the future who is actually on a closed timelike curve intersecting the market's spacetime location. Crucially, rational time travelers would have no incentive to conceal trades on causally inert events (where outcomes are price-independent), making such arbitrage fundamentally unobservable in normal market activity yet glaringly obvious in aggregate price behavior. We acknowledge that many-worlds time travel evades our test. The only empirically testable model of backward time travel is the single-timeline model, and prediction markets appear to falsify it.
\end{abstract}

\bigskip
\noindent \textit{JEL Classification:} G14, G12, Z00 \\
\noindent \textit{Keywords:} time travel, prediction markets, arbitrage, efficient markets, chronology protection

\newpage

%----------------------------------------------------------------------
\section{Introduction}
%----------------------------------------------------------------------

\subsection{What General Relativity Permits}

The question of whether backward time travel is physically possible begins with Einstein's general theory of relativity. Einstein's field equations,
\begin{equation}
  G_{\mu\nu} + \Lambda g_{\mu\nu} = 8\pi G\, T_{\mu\nu},
\label{eq:einstein}
\end{equation}
relate the geometry of spacetime (encoded in the Einstein tensor $G_{\mu\nu}$) to the distribution of matter and energy (the stress-energy tensor $T_{\mu\nu}$). The cosmological constant $\Lambda$ and gravitational constant $G$ complete the picture. These equations do not, a priori, forbid backward time travel.

\citet{godel1949} demonstrated this explicitly by constructing an exact solution describing a rotating, dust-filled universe with a nonzero cosmological constant. The line element in G\"{o}del's original coordinates takes the form:
\begin{equation}
ds^2 = a^2 \left[ -(dt + e^{x_1} dx_2)^2 + dx_1^2 + \tfrac{1}{2} e^{2x_1} dx_2^2 + dx_3^2 \right],
\label{eq:godel}
\end{equation}
where $a$ is a constant related to the matter density and cosmological constant via $a^{-2} = 8\pi \rho = -2\Lambda$, and $(t, x_1, x_2, x_3)$ are the G\"{o}del coordinates.

The critical feature of this metric is that it admits \textit{closed timelike curves} (CTCs)---trajectories through spacetime that loop back to their own past while remaining everywhere timelike (i.e., traversable by massive particles at subluminal speeds). Specifically, for sufficiently large excursions in the $x_1$ direction, the $\partial/\partial x_2$ Killing vector becomes timelike, permitting closed orbits in the $(t, x_2)$ plane that constitute genuine journeys into one's own past.

This is not a coordinate artifact. G\"{o}del's solution satisfies Einstein's field equations exactly, with a physically reasonable stress-energy tensor (pressureless dust). The CTCs are a geometric property of the spacetime, not a mathematical trick.

Of course, our universe does not appear to be a G\"{o}del universe---it is expanding rather than rotating, and observational bounds on cosmic rotation are stringent. But G\"{o}del's result establishes a proof of concept: \textit{the mathematical framework of general relativity does not contain an intrinsic prohibition on backward time travel}. Subsequent work by \citet{tipler1974} and others identified additional CTC-admitting spacetimes, some involving more physically plausible configurations such as traversable wormholes.

\citet{hawking1992} attempted to close this loophole with the chronology protection conjecture, arguing that quantum effects (specifically, the divergence of the renormalized stress-energy tensor near a chronology horizon) would destroy any CTC before it could form. This conjecture remains unproven. A proof would require a complete theory of quantum gravity, which we do not possess.

The upshot: physics has neither proved nor disproved the possibility of backward time travel. The question remains empirically open. This provides the motivation for our approach: if the physicists cannot settle the matter, perhaps the market can.

\subsection{Existing Empirical Tests}

Two prior attempts to test for time travel empirically deserve mention.

The first is Hawking's party experiment. In 2009, Stephen Hawking hosted a champagne reception for time travelers at Gonville \& Caius College, Cambridge, sending out invitations only after the party had concluded. No one attended. Hawking interpreted this as evidence against time travel. We address the limitations of this test in Section~\ref{sec:hawking}.

The second is due to \citet{reinganum1986}, who observed that costless backward time travel implies the elimination of positive nominal interest rates through temporal arbitrage. The argument is elegant: deposit \$100 in a risk-free account, travel forward $n$ years, withdraw the compounded balance, travel back, and repeat. A single agent with a time machine and a savings account generates unbounded wealth. In equilibrium, this arbitrage compresses the nominal interest rate to zero. Since we observe persistently positive nominal rates, time machines do not exist. QED.

\subsection{Our Contribution}

The present paper contributes a third test in this lineage. We show that if single-timeline backward time travel is possible and at least one rational agent in the entire future history of the universe possesses both a time machine and a brokerage account, then \textit{at least one} prediction market contract would display degenerate prices---collapsing to \$1 or \$0---from the moment of inception. The empirical record contains hundreds of thousands of resolved prediction market contracts. Not one exhibits this behavior.

Our argument improves on \citet{reinganum1986} in three respects.
First, the prediction is more extreme: Reinganum's equilibrium outcome
is zero interest rates, which is unusual but not inconceivable (Japan,
after all, came close). Our equilibrium outcome is the complete
elimination of uncertainty from any given prediction market that has
ever existed, which is observationally absurd.

Second, a rational time traveler would not execute Reinganum's
arbitrage to completion. Driving the nominal rate to zero destroys the
compounding opportunity and destabilizes the financial system the
traveler depends on. A monopolist traveler extracts rents quietly,
leaving rates positive---rendering the interest rate test
indeterminate. Our test, applied to events whose outcomes are
independent of the market price, is immune to this strategic
concealment problem: the traveler has no reason to hide, because the
payoff does not depend on whether anyone observes the trade.

Third, we distinguish between events that are causally inert with
respect to the market price and those that are causally sensitive. For
the former class---which includes natural phenomena, physical
measurements, and a large share of resolved prediction contracts---the
proof is immediate and no game-theoretic complications arise. For the
latter, a causal feedback loop between the price and the outcome
introduces fixed-point problems that may render time travel
observationally undetectable even on a single timeline.

Put another way, if a rational time traveler with sufficient market access exists and can trade on causally inert events, prediction markets should exhibit extreme price distortions. We do not observe such distortions.

The structure is as follows. Section~\ref{sec:argument} develops the formal argument. Section~\ref{sec:hawking} explains why prediction markets succeed where Hawking's party failed. Section~\ref{sec:manyworlds} addresses the many-worlds objection. Section~\ref{sec:empirical} gestures at data. Section~\ref{sec:conclusion} concludes.

%----------------------------------------------------------------------
\section{The Prediction Market Argument}
\label{sec:argument}

\subsection{Setup}

A \textit{prediction market} is a market in a binary contract that pays \$1 if a specified event $E$ occurs by a specified resolution date $T$, and \$0 otherwise. Let $P(t)$ denote the market price of this contract at time $t < T$. Under standard no-arbitrage conditions and risk-neutral pricing, $P(t)$ reflects the market's probability assessment of $E$ occurring \citep{fama1970, manski2006}.

We require four assumptions:

\begin{assumption}[Single-Timeline Backward Time Travel]
\label{ass:tt}
There exists at least one agent located at some time $T' > T$ who possesses a device capable of transporting them (and their knowledge) to any time $t < T$ within the same timeline---that is, the same causal history.
\end{assumption}

\begin{assumption}[Rationality]
\label{ass:rational}
The time-traveling agent prefers more wealth to less and recognizes deterministic arbitrage opportunities.
\end{assumption}

\begin{assumption}[Market Access]
\label{ass:market}
Transaction costs are negligible, and the agent can trade in the prediction market at time $t$.
\end{assumption}

Assumption~\ref{ass:market} is the weakest link, and we address it below. For now, we take it as given.

\begin{assumption}[Particulars of Market Microstructure]
\label{ass:microstructure}
Ordinary brokers would never allow trading of the kind that argued for in \citet{reinganum1986}. Brokers and market participants would surely react, similar to The London Metal Exchange (LME) in 2022 which was defined by the March nickel market crisis where trading was halted, and undone for the day.
\end{assumption}
We assume that a time-traveling market participant would use decentralized blockchain markets that can't coordinate to reject valid transactions.

\subsection{The Arbitrage}

\begin{proposition}
Under Assumptions~\ref{ass:tt}--\ref{ass:microstructure}, the equilibrium price of every prediction market contract on a causally inert event equals $P(t) = 1$ if $E$ occurs and $P(t) = 0$ if $E$ does not occur, for all $t$ from market inception to resolution.
\end{proposition}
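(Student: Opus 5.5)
The proof is a no-arbitrage argument by contradiction, run at an arbitrary time $t$ between market inception and resolution $T$. First I make ``causally inert'' precise: the indicator $\mathbf{1}_E$ is independent of the whole price path $\{P(s)\}_{s<T}$ and of all trades in the market. Equivalently, changing any price or order flow leaves the realization of $E$ on the single timeline unchanged. Assumption~\ref{ass:tt} gives an agent at $T'>T$ who can carry knowledge of $\mathbf{1}_E$ back to any $t<T$. Because there is only one causal history, that knowledge is still correct after arrival. Causal inertness is exactly what rules out the traveler's own trading flipping the outcome, so no fixed-point or consistency problem arises.

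Next I show that any price $P(t)\neq\mathbf{1}_E$ gives a riskless profit. Suppose $E$ occurs and $P(t)=p<1$. The traveler buys contracts at $p$ and earns $1-p>0$ per contract with certainty at $T$. Symmetrically, if $E$ fails and $p>0$, the traveler sells (or buys the complementary contract at $1-p$) and earns $p>0$. By Assumption~\ref{ass:rational} the agent recognizes and takes this deterministic profit. By Assumption~\ref{ass:market} transaction costs do not absorb the gain, and the trade can be placed at any $t$, including inception. By Assumption~\ref{ass:microstructure} the trade goes through on a decentralized venue where the fills cannot be reversed. The traveler's demand is bounded only by the depth available below $1$ (resp.\ above $0$). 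So they keep trading, or the other participants who watch the order flow update toward the traveler's side, until the price reaches the boundary.

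Third, I show the boundary price is an equilibrium and the only one. At $P(t)=\mathbf{1}_E$ the traveler has zero expected profit, so there is no arbitrage left. Any interior price brings back the arbitrage from the previous step, so it cannot persist in equilibrium. Since $t$ was arbitrary, $P(t)=\mathbf{1}_E$ for every $t$ from inception to resolution. The traveler may also go back again, or many travelers may act, but this does not affect the argument: they all push in the same direction because they all know the same outcome.

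I expect the main obstacle to be step two's claim that the price actually reaches the boundary, rather than the traveler merely profiting. A finite order book and counterparties who refuse to trade could leave prices interior, with the traveler simply absorbing all the liquidity. My first approach is to define the equilibrium price as the marginal price at which there is no remaining arbitrage. Because the traveler faces no risk and Assumption~\ref{ass:market} makes trading frictionless, they will take every offer below $1$. Any remaining quote is then, by definition, at the degenerate value. If that is not enough, I would fall back on the standard rational-expectations argument: other traders learn from the order flow, which moves the posted price to $\mathbf{1}_E$.
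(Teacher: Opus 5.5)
Your proposal is correct and follows essentially the same route as the paper: a traveler from $T'>T$ who knows whether $E$ occurred buys (or sells) at any $t$ with $P(t)\neq\mathbf{1}_E$, using up the available liquidity until the price reaches the boundary, and since every $t$ from inception onward can be targeted, the price is degenerate throughout. Your formalization of causal inertness, the short via the complementary contract, and your flagging of the finite-liquidity step just spell out more carefully what the paper asserts in one line, namely that the agent ``continues buying until $P(t)=1$.''
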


\begin{proof}
Suppose event $E$ occurs (the case where $E$ does not occur is symmetric). The agent at $T' > T$ observes that $E$ has occurred. By Assumption~\ref{ass:tt}, the agent travels to some time $t < T$ at which $P(t) < 1$. The agent purchases the contract at price $P(t)$, receiving \$1 at resolution---a riskless profit of $1 - P(t) > 0$. By Assumption~\ref{ass:rational}, the agent executes this trade. Moreover, the agent can repeat this process: travel back, buy more contracts, and collect guaranteed profits.

In the presence of finite liquidity, the agent's purchases drive $P(t)$ upward. The agent continues buying until $P(t) = 1$, at which point no further profit is available. Since the agent can target any time $t$ at which $P(t) < 1$, this argument applies to every instant from market inception onward. Thus $P(t) = 1$ for all $t \in [t_0, T]$.

Symmetrically, if $E$ does not occur, the agent sells (or shorts) the contract at any $P(t) > 0$, driving the price to zero.
\end{proof}

\begin{corollary}
Under single-timeline backward time travel, no prediction market contract on a causally inert event ever displays interior probabilities. Prices are degenerate from inception.
\end{corollary}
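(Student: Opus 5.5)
The Corollary should follow almost immediately from the Proposition, so I will derive it as a direct consequence rather than rerun the arbitrage argument. Fix an arbitrary contract on a causally inert event $E$ with inception $t_0$ and resolution date $T$. Resolution is binary, so exactly one of two cases holds: $E$ occurs or it does not. The Proposition, under Assumptions~\ref{ass:tt}--\ref{ass:microstructure}, gives $P(t)=1$ for all $t\in[t_0,T]$ in the first case and $P(t)=0$ for all $t \in [t_0,T]$ in the second. Either way $P(t)\in\{0,1\}$ on the whole trading interval, so no interior price $P(t)\in(0,1)$ is ever displayed. The contract was arbitrary, so the claim holds for every such contract.

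Two details need attention before this is airtight. First, ``displays'' has to mean the price actually recorded on the single timeline. Assumption~\ref{ass:tt} gives one causal history, so the price path the Proposition pins down is the only one that is ever observed. There is no ``earlier'' version of history in which interior prices appeared and were later overwritten. I will state this explicitly, because it is exactly the point at which the many-worlds case would break the conclusion. Second, ``from inception'' should mean $t=t_0$ itself. The Proposition's argument lets the agent target any $t$ with $P(t)<1$, including $t_0$, so I will take the opening price to be degenerate as well. I will note that this requires the agent's order to be executable at the first instant the market exists, which is part of Assumption~\ref{ass:market}.

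The main obstacle is that the Corollary inherits any weakness in the Proposition's step ``the agent continues buying until $P(t)=1$.'' With finite liquidity and a continuum of times $t$, the arbitrage has to be carried out at every instant, or at least densely, and with enough capital to exhaust every counterparty's willingness to sell below \$1. My first move would be a contradiction argument that avoids any explicit construction. Suppose some $t^*\in[t_0,T]$ on the single timeline has $P(t^*)<1$ while $E$ occurs. Then by Assumption~\ref{ass:rational} a profitable trade at $t^*$ is available to the agent and, by Assumptions~\ref{ass:tt} and~\ref{ass:market}, executable. If that trade is taken, it is already part of the unique history and must already be reflected in $P(t^*)$. The only consistent histories are therefore those with no remaining arbitrage, namely $P\equiv 1$.

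The delicate part is justifying ``must already be reflected,'' which is essentially a Novikov-style self-consistency postulate. If the argument needs it, I would state it as an interpretation of Assumption~\ref{ass:tt} rather than claim it as derived.
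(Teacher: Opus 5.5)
Your proposal is correct and takes the paper's route: the paper gives no separate argument for the Corollary and simply reads it off the Proposition's conclusion that $P(t)=1$ or $P(t)=0$ for all $t\in[t_0,T]$, which is exactly your binary case split on whether $E$ occurs. Your added care about a self-consistency postulate for ``must already be reflected'' matches the paper's own remark in Section~\ref{sec:causal} that for causally inert events the Novikov fixed point is trivially stable, so it makes the paper's implicit step explicit rather than changing the argument.
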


The corollary is the key empirical prediction. It does not require that many agents have time machines, or that time travel be cheap, or that travelers be altruistic. It requires one agent, anywhere in the future, with a profit motive and a Polymarket account.

\subsection{Causal Sensitivity and the Feedback Problem}
\label{sec:causal}

The proof above assumes that the event $E$ is \textit{causally inert} with respect to the market price---that is, the outcome of $E$ does not depend on $P(t)$. This holds for a large class of events: natural disasters, physical measurements, astronomical phenomena, historical facts, and sporting outcomes in markets too small to influence play. For these events, the time traveler's information remains valid regardless of its effect on the market price, and the proof goes through without complication.

A subtler problem arises for \textit{causally sensitive} events---those whose outcomes can be influenced by market participants who observe $P(t)$. Elections are the canonical example. If a prediction market jumps to \$1 for candidate $X$ at inception, this signal propagates: donors reallocate funds, media coverage shifts, voter turnout changes, and the candidate's opponents may alter strategy. The outcome that the time traveler observed in the original timeline may not obtain in the timeline modified by their trade.

Formally, let $\omega(P)$ denote the outcome of event $E$ as a function of the price path $P$. For causally inert events, $\omega$ is constant in $P$ and the proof is immediate. For causally sensitive events, $\omega$ depends on $P$, and the time traveler faces a fixed-point problem: they must find a price $P^*$ such that $\omega(P^*) = \omega^*$, where $\omega^*$ is the outcome that justifies $P^*$. Three cases arise:

\begin{enumerate}
    \item \textbf{Stable fixed point.} The event outcome is robust to the information revelation. The traveler trades, the price becomes degenerate, agents react, but the outcome occurs anyway. Under the Novikov self-consistency principle \citep{friedman1990}, only self-consistent histories are permitted on a single timeline, so the traveler can only have arrived from a future in which this fixed point obtains. The market is degenerate and correct---a bizarre but logically consistent equilibrium.

    \item \textbf{No fixed point in pure strategies.} The traveler's trade flips the outcome, which invalidates the trade, which restores the outcome, \textit{ad infinitum}. No self-consistent history exists in which the traveler trades at a degenerate price. Under Novikov self-consistency, the traveler is constrained: they either cannot trade on this event, or must trade at an interior price that constitutes a mixed-strategy equilibrium. In either case, the market displays non-degenerate prices---observationally identical to a world without time travel.

    \item \textbf{Strategic concealment.} A rational monopolist traveler may recognize that degenerate prices on causally sensitive events destroy the arbitrage (by changing the outcome) or attract unwanted scrutiny. The optimal strategy is to trade quietly, extracting rents while leaving the price perturbed but interior. This is precisely the behavior of an informed trader in a Kyle (1985)-type model, and is again observationally indistinguishable from a world without time travel.
\end{enumerate}

The critical observation is that for causally \textit{inert} events, none of these complications apply. The traveler has no reason to conceal (the outcome is price-independent), no feedback loop exists (the fixed point is trivially stable), and the proof yields a clean, falsifiable prediction: degenerate prices from inception. The empirical record contains tens of thousands of resolved contracts on causally inert events---weather, natural phenomena, physical measurements, verified historical occurrences---and not one displays degenerate prices. This alone is sufficient for falsification.

\subsection{On the Transaction Cost Assumption}

Assumption~\ref{ass:market} deserves scrutiny. Modern prediction markets charge fees on the order of 1--5\%. This means the time traveler's arbitrage profit is not $1 - P(t)$ but $1 - P(t) - c$, where $c$ represents transaction costs. In principle, if $P(t) > 1 - c$, the trade is not profitable.

However, this objection fails for two reasons. First, markets routinely trade at prices well below $1 - c$ for most of their lifetimes; even a 5\% fee leaves enormous profit available when $P(t) = 0.3$ (i.e., 70 cents of riskless gain minus a nickel of fees). Second, the time traveler faces no inventory risk, no information asymmetry, and no model uncertainty. The only ``cost'' is the transaction fee, which is negligible relative to the certainty of the payoff. If you \textit{know} the coin is going to land heads, you will pay the vig.

\subsection{Relationship to Reinganum (1986)}

\citet{reinganum1986} showed that backward time travel implies zero nominal interest rates. Our result is strictly stronger in three senses:

\begin{enumerate}
    \item \textbf{Extremity of prediction.} Reinganum's equilibrium is $r = 0$. Ours is that a single prediction market on any causally inert event has degenerate prices from inception. The former is unusual; the latter is absurd.

    \item \textbf{Testability.} Whether observed interest rates are ``too high'' requires a counterfactual model of what rates should be absent time travel. Our test requires only that we observe any prediction market contract with $0 < P(t) < 1$ for any $t$ before resolution. This is trivially satisfied by every contract ever traded.

    \item \textbf{Robustness to strategic concealment.} Reinganum's equilibrium assumes the time traveler arbitrages competitively, driving $r$ to zero through repeated deposit-compound-withdraw cycles. But a rational monopolist traveler would not do this. Driving $r$ to zero destroys the arbitrage opportunity itself---if the risk-free rate is zero, there is nothing to compound. Worse, $r = 0$ triggers catastrophic economic disruption: bank failures, credit market collapse, currency instability. A sophisticated traveler would extract rents quietly, leaving rates perturbed but positive, precisely as a monopolist restricts output to maintain price. The interest rate test is therefore indeterminate: observed positive rates are consistent with both the absence of time travel and the presence of a strategically restrained time traveler.

    Our prediction market test, applied to causally inert events, is immune to this objection. The time traveler gains nothing by concealment because the event outcome is independent of the market price. Whether or not the market is degenerate, the asteroid still strikes or does not, the temperature still exceeds the threshold or does not, the earthquake still occurs or does not. The payoff is \$1 regardless of whether anyone notices the price. A rational traveler with a guaranteed \$1 payoff and no causal feedback has no reason to leave money on the table.
\end{enumerate}

The relationship to the efficient markets hypothesis (EMH) is worth noting. \citet{fama1970} defines market efficiency as the condition that prices fully reflect available information. The prediction market argument can be understood as a corollary of a temporal extension of the EMH: if information from the future is ``available'' (via time travel), then prices must reflect it. \citet{beaulier2025} formalize a similar point, arguing for an \textit{Extended Efficient Market Hypothesis} that incorporates the absence of temporal information leakage.

%----------------------------------------------------------------------
\section{No One Had the time to attend Hawking's Party}
\label{sec:hawking}
%----------------------------------------------------------------------

In 2009, Stephen Hawking hosted a champagne reception for time travelers at Gonville \& Caius College, Cambridge. He released the invitations only after the party had taken place. No one arrived. Hawking interpreted this as evidence against time travel.

The Hawking party experiment, while inspired, suffers from a fundamental limitation that has not been formally articulated: even if closed timelike curves exist and can be exploited, there is no reason to expect that the spacetime location of Hawking's party lies on any accessible CTC.

Crucially, ``time travel'' in general relativity is not arbitrary navigation through spacetime. An observer must physically traverse a specific closed timelike curve. The geometry determines both the path and the set of reachable events. One cannot specify an arbitrary destination $(t,x,y,z)$; one can only follow a pre-existing loop in the causal structure and thereby return to earlier events lying on that same worldline.

This sharply constrains interpretations of Hawking's experiment. Even if a future civilization possessed the ability to exploit CTCs, there is no reason to expect that the spacetime location of the party lies on any such curve. Formally, let $\gamma(\tau)$ denote a CTC. For a time traveler to arrive at the event $p_{\text{party}}$, it must be that
\begin{equation}
p_{\text{party}} \in \gamma.
\end{equation}
But in a generic spacetime (and even in G\"{o}del spacetime), the set of events lying on CTCs is highly constrained, and a randomly chosen event will almost surely not lie on any closed timelike curve accessible to a given observer.

Therefore, the absence of visitors at Hawking's event does not meaningfully constrain the existence of CTCs or time travel more broadly. It demonstrates only that the specific spacetime point corresponding to the party was not situated on, or reachable via, a closed timelike trajectory available to future observers.

\textbf{Prediction markets, by contrast, represent a distributed test.} Hundreds of thousands of contracts have been created across vast geographic and temporal ranges, dramatically increasing the probability that at least one market would intersect an accessible CTC if such curves exist and are exploitable. The universal absence of degenerate prices is therefore far more informative than any single-point experiment could be.

Moreover, prediction markets test not merely whether time travelers \textit{could} arrive at a specific location, but whether they would have economic incentive to do so. A rational time traveler has no reason to attend Hawking's party---it offers no financial gain. But a prediction market on a causally inert event offers guaranteed arbitrage profits with zero risk. The fact that no such arbitrage has been executed across the entire historical record of prediction markets is strong evidence that either (a) CTCs do not exist, or (b) they exist but cannot be exploited by rational agents with access to financial markets.

%----------------------------------------------------------------------
\section{The Many-Worlds Escape Hatch}
\label{sec:manyworlds}
%----------------------------------------------------------------------

The most serious objection to our argument comes not from physics or finance but from the philosophy of quantum mechanics. Under the Everett many-worlds interpretation---or, equivalently, under any branching-timeline model of time travel---a time traveler arriving in ``the past'' does not arrive in \textit{our} past. They arrive in a newly created branch of the universal wavefunction. Their arbitrage trades affect prediction market prices in that branch, not in ours.

Under this model, our observation that prediction markets maintain interior probabilities is perfectly consistent with the existence of backward time travel. The prices in our branch are non-degenerate because no traveler from our future has arrived in our past. They may have arrived in some other branch's past, driving that branch's Polymarket prices to \$1, but we would never observe this.

We make three responses:

First, \textit{many-worlds time travel is empirically indistinguishable from no time travel} from within any single branch. If a phenomenon cannot, even in principle, be detected from within one's own causal history, it lies outside the domain of empirical science. This is not a defect of our test; it is a property of the hypothesis.

Second, \textit{the single-timeline model is the only testable model of backward time travel}, and our paper falsifies it. If one insists on many-worlds time travel, one has retreated to an unfalsifiable position, which is their prerogative but not our problem.

Third, \textit{Occam's razor.} The observation that prediction market prices are non-degenerate can be explained by (a) backward time travel does not occur, or (b) backward time travel occurs but only across branches we can never observe. Explanation (a) requires one assumption: no time travel. Explanation (b) requires the many-worlds interpretation of quantum mechanics, branching timelines, inter-branch traversability, and the metaphysical coincidence that every branch we can observe happens to be one that no time traveler has visited. We leave the reader to their own priors.

%----------------------------------------------------------------------
\section{Empirical Evidence}
\label{sec:empirical}
%----------------------------------------------------------------------

Modern prediction markets provide no shortage of data. Platforms such as Polymarket, Kalshi, PredictIt, and their predecessors (Intrade, Iowa Electronic Markets) have collectively resolved hundreds of thousands of binary contracts on elections, economic indicators, geopolitical events, sports outcomes, and miscellany ranging from celebrity behavior to the weather.

In every resolved contract, the price path exhibits the same qualitative behavior: $P(t)$ fluctuates in the interior of $[0, 1]$ as information arrives, converging toward the terminal value only as resolution approaches. No contract in the historical record has exhibited the degenerate price path---an immediate jump to 0 or 1 at market open, sustained through resolution---predicted by the time travel hypothesis.

Figure~\ref{fig:prices} illustrates the point schematically.

\begin{figure}[ht]
\centering
\begin{tikzpicture}[scale=1.0]
  % Axes
  \draw[->] (0,0) -- (10,0) node[right] {$t$};
  \draw[->] (0,0) -- (0,5.5) node[above] {$P(t)$};

  % Labels
  \node[below] at (0,0) {$t_0$};
  \node[below] at (9,0) {$T$};
  \draw[dashed, gray] (0,4.5) -- (10,4.5);
  \node[left] at (0,4.5) {\$1};
  \node[left] at (0,0) {\$0};

  % Observed price path (random walk converging to 1)
  \draw[thick, blue] (0.5,2.0) -- (1.0,2.3) -- (1.5,1.9) -- (2.0,2.4) -- (2.5,2.1)
    -- (3.0,2.6) -- (3.5,2.3) -- (4.0,2.8) -- (4.5,3.0) -- (5.0,2.7)
    -- (5.5,3.2) -- (6.0,3.5) -- (6.5,3.3) -- (7.0,3.8) -- (7.5,4.0)
    -- (8.0,4.2) -- (8.5,4.4) -- (9.0,4.5);
  \node[blue, right] at (9.0,4.2) {\small Observed};

  % Time-travel-consistent price path
  \draw[thick, red, dashed] (0.5,4.5) -- (9.0,4.5);
  \node[red, right] at (9.0,5.0) {\small Time travel};
\end{tikzpicture}
\caption{Stylized price paths for a prediction market contract on an event that ultimately occurs. The blue (solid) line shows the typical observed path: interior probabilities that converge to \$1 as information arrives. The red (dashed) line shows the price path implied by single-timeline backward time travel: immediate collapse to \$1 at market inception.}
\label{fig:prices}
\end{figure}
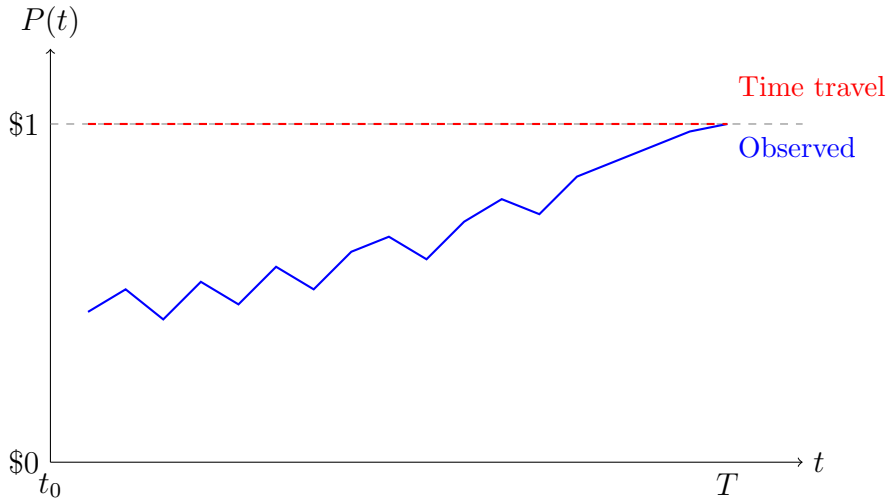

\citet{nemiroff2014} conducted a complementary empirical exercise, searching internet data (specifically, search engine queries and social media posts) for evidence of anachronistic foreknowledge---prescient references to events or terms before they entered public awareness. They found no such evidence. Their approach tests a different channel (information leakage rather than market arbitrage) but reaches the same conclusion.

The empirical section of this paper consists entirely of observing that a thing that would be immediately obvious if it existed does not appear to exist. We're ok with that.

%----------------------------------------------------------------------
\section{Conclusion}
\label{sec:conclusion}
%----------------------------------------------------------------------

We have argued that single-timeline backward time travel implies the collapse of all prediction market prices to degenerate values from the moment of market inception. This prediction is falsified by every prediction market contract in the historical record.

Our contribution sits within a growing consilience of evidence against backward time travel, spanning physics, economics, and financial markets:

\begin{itemize}
    \item \citet{godel1949} showed that general relativity \textit{permits} closed timelike curves, establishing that the question is not trivially resolved by known physics.
    \item \citet{hawking1992} conjectured that quantum effects \textit{prevent} closed timelike curves, but the conjecture remains unproven.
    \item \citet{reinganum1986} showed that backward time travel \textit{implies} zero nominal interest rates, which we do not observe.
    \item The present paper shows that backward time travel \textit{implies} degenerate prediction market prices, which we do not observe.
    \item \citet{nemiroff2014} searched for empirical traces of foreknowledge on the internet and found none.
\end{itemize}

The many-worlds objection provides a logically consistent escape, but at the cost of unfalsifiability.

Two implications of our analysis deserve particular emphasis. First, if time travel were possible, rational actors would have straightforward ways to exploit prediction markets that would be fundamentally unobservable in individual trades yet glaringly obvious in aggregate price behavior. For causally inert events---where the outcome is independent of the market price---a time traveler gains nothing from concealment. The payoff is guaranteed regardless of whether anyone notices the degenerate price. This stands in sharp contrast to Reinganum's interest rate arbitrage, where strategic restraint is rational because complete exploitation destroys both the financial system and the arbitrage opportunity itself.

Second, even in spacetimes that admit closed timelike curves, access to any specific market or event through time travel requires that the traveler actually be on a CTC passing through that particular spacetime location. One cannot arbitrarily navigate to any desired coordinates; one can only traverse pre-existing loops in the causal structure. This constraint renders most laboratory experiments---including Hawking's party---essentially uninformative: the absence of time travelers at a single spacetime point tells us only that this point was not situated on an accessible CTC, not that CTCs do not exist. Prediction markets, by contrast, represent a distributed test across vast geographic and temporal ranges, dramatically increasing the probability that at least one market would intersect an accessible CTC if such curves exist and are exploitable. The universal absence of degenerate prices is therefore far more informative than any single-point experiment could be.

We close with an observation. The strongest evidence against backward time travel comes not from a particle accelerator, a telescope, or a thought experiment about rotating universes. It comes from the fact that someone, somewhere, at some point in the entire future history of the universe, would have made a lot of money on Polymarket, and yet no one has. The future, it appears, is not yet trading.

\bigskip
\noindent\textit{Acknowledgments.} We thank no time travelers for their comments on earlier drafts of this paper, which is, under our own hypothesis, exactly what we would expect.

%----------------------------------------------------------------------
% REFERENCES
%----------------------------------------------------------------------
\newpage
\bibliographystyle{apalike}

\begin{thebibliography}{99}

\bibitem[Beaulier et al.(2025)]{beaulier2025}
Beaulier, S., Caplan, B., \& Elder, T. (2025).
\newblock The extended efficient market hypothesis.
\newblock \textit{Economics Letters}, 249, 112192.
\newblock \url{https://www.sciencedirect.com/science/article/pii/S0165176525000461}

\bibitem[Fama(1970)]{fama1970}
Fama, E.~F. (1970).
\newblock Efficient capital markets: A review of theory and empirical work.
\newblock \textit{Journal of Finance}, 25(2), 383--417.

\bibitem[Friedman et al.(1990)]{friedman1990}
Friedman, J., Morris, M.~S., Novikov, I.~D., Echeverria, F., Klinkhammer, G., Thorne, K.~S., \& Yurtsever, U. (1990).
\newblock Cauchy problem in spacetimes with closed timelike curves.
\newblock \textit{Physical Review D}, 42(6), 1915--1930.

\bibitem[G\"{o}del(1949)]{godel1949}
G\"{o}del, K. (1949).
\newblock An example of a new type of cosmological solutions of {E}instein's field equations of gravitation.
\newblock \textit{Reviews of Modern Physics}, 21(3), 447--450.

\bibitem[Hawking(1992)]{hawking1992}
Hawking, S.~W. (1992).
\newblock Chronology protection conjecture.
\newblock \textit{Physical Review D}, 46(2), 603--611.

\bibitem[Kyle(1985)]{kyle1985}
Kyle, A.~S. (1985).
\newblock Continuous auctions and insider trading.
\newblock \textit{Econometrica}, 53(6), 1315--1335.

\bibitem[Manski(2006)]{manski2006}
Manski, C.~F. (2006).
\newblock Interpreting the predictions of prediction markets.
\newblock \textit{Economics Letters}, 91(3), 425--429.

\bibitem[Nemiroff \& Wilson(2014)]{nemiroff2014}
Nemiroff, R.~J. \& Wilson, J. (2014).
\newblock Searching the {I}nternet for evidence of time travelers.
\newblock \textit{arXiv preprint arXiv:1312.7128}.

\bibitem[Reinganum(1986)]{reinganum1986}
Reinganum, M.~R. (1986).
\newblock Is time travel impossible? {A} financial proof.
\newblock \textit{Journal of Portfolio Management}, 13(1), 10--12.

\bibitem[Tipler(1974)]{tipler1974}
Tipler, F.~J. (1974).
\newblock Rotating cylinders and the possibility of global causality violation.
\newblock \textit{Physical Review D}, 9(8), 2203--2206.

\end{thebibliography}

\end{document}